\newif\ifFull
\newif\ifs
\newtheorem{lemma}{Lemma}
\newtheorem{theorem}{Theorem}
\renewcommand{\subsection}[1]{\paragraph{\bf #1.}}
\renewcommand{\emph}[1]{\textbf{\textit{#1}}}%
\renewcommand{\cal}{\mathbf}
\newcommand{\cal}{\mathbf}
\begin{document}
\clubpenalty=1000
\widowpenalty=1000
\tolerance=1000
\hyphenpenalty=500

\ifFull\else
\settopmatter{printacmref=false, printccs=true, printfolios=false}
\fi

\ifFull
\title{Answering Spatial Multiple-Set Intersection Queries Using 2-3 Cuckoo Hash-Filters}
\else
\title[Answering Spatial Multiple-Set Intersection Queries]{Answering 
Spatial Multiple-Set Intersection Queries Using 2-3 Cuckoo Hash-Filters}
\fi

\ifFull
\author{Michael T.~Goodrich \\
Dept. of Computer Science, Univ. of California, Irvine}
\else
\author{Michael T.~Goodrich}
\affiliation{Department of Computer Science, Univ. of California, Irvine}
\email{goodrich@uci.edu}
\fi

\ifFull
\date{}
\fi

\ifFull
\maketitle
\fi

\begin{abstract}
We show how to 
answer spatial multiple-set intersection queries in
$O(n(\log w)/w\, +\, kt)$ expected time,
where
$n$ is the total size of the $t\le w^c$ sets involved in the query,
$w$ is the number of bits in a memory word, $k$ is the output size,
and $c\ge 1$ is any fixed constant.
\ifFull
This improves the asymptotic performance over previous 
solutions and is based on an interesting data structure, known as
\emph{2-3 cuckoo hash-filters}.
Our results apply in the word-RAM model (or practical RAM model), 
which allows for constant-time bit-parallel operations, such
as bitwise AND, OR, NOT, and MSB (most-significant 1-bit), as exist in 
modern CPUs and GPUs.
Our solutions apply to any multiple-set intersection 
queries in spatial data sets
that can be reduced to one-dimensional range queries, such
as spatial join queries for
one-dimensional points or
sets of points stored along space-filling curves, 
which are used in GIS applications.
\fi
\end{abstract}

\ifFull\else
\begin{CCSXML}
<ccs2012>
<concept>
<concept_id>10003752.10003753.10003761.10003762</concept_id>
<concept_desc>Theory of computation~Parallel computing models</concept_desc>
<concept_significance>500</concept_significance>
</concept>
</ccs2012>
\end{CCSXML}

\ccsdesc[500]{Theory of computation~Parallel computing models}

\keywords{set intersection, 2-3 cuckoo filters, range searching, GIS}
\fi

\ifFull\else
\maketitle
\fi

\section{Introduction}

\ifFull
\begin{figure}[hbtp]
\begin{center}
\includegraphics[width=2.8in, trim = 1.3in 0.5in 1.3in 0.4in, clip]{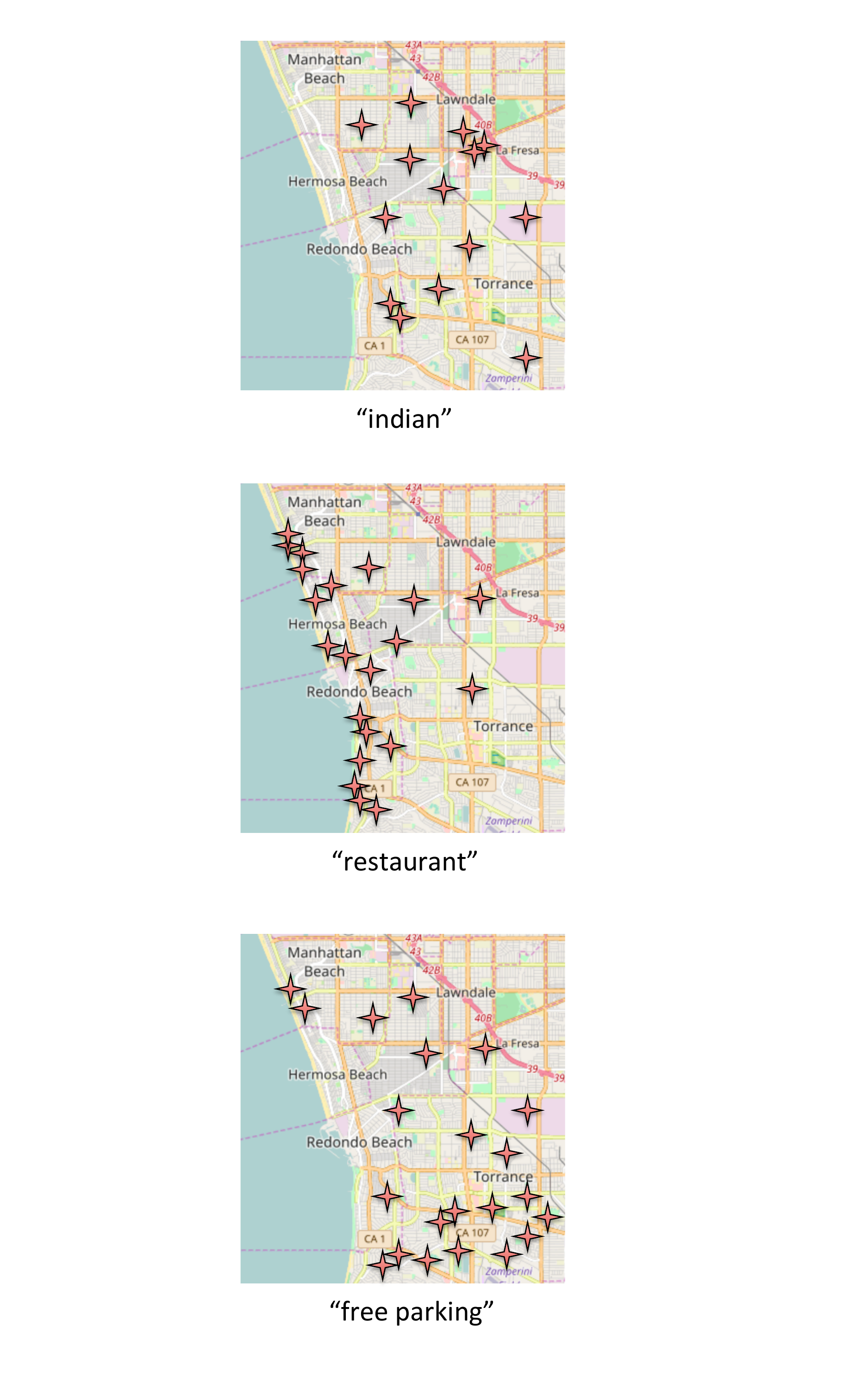}
\end{center}
	\vspace*{-10pt}
\caption{\label{fig:multi} An example spatial multiple-set intersection search. 
Each frame shows a set of possible responses for three different keyword
searches in a geographic region, which would then need to be intersected to
find sites matching all three keyword searches.
Background image Copyright~\copyright~OpenStreetMap contributors. 
Licensed as CC BY-SA.}
\end{figure}
\fi

\ifFull
Algorithms for
answering spatial multiple-set intersection queries 
have a number of different applications,
including in keyword-based location searching, and web searching.
E.g., 
see~\cite{Carver-1080,Chou2008293,Ji2011,Jiang-1080}.
For example, such algorithms can 
arise in the ``inner loop'' computation for answering 
conjunctive keyword queries in search engines.
Alternatively, we might have a data structure that 
stores sites matching certain keywords according to some spatial index.
Then, a user might issue a conjunctive keyword search 
and she might be interested 
in all of the sites close to a geographic region that match all of her search terms.
We are therefore interested in this paper in high-performance algorithms for
computing such spatial multiple-set intersection queries. 
See Figure~\ref{fig:multi}.

\subsection{Computational Model}
\fi

In this paper, we are interested in 
asymptotic improvements to spatial multiple-set intersection queries
by taking advantage of bit-level parallelism, in a
computational model known as the
\emph{practical RAM} model~\cite{Miltersen1996}
or \emph{word-RAM} model~\cite{Hagerup1998}.
By ``bit-level parallelism,'' we are referring to an ability to compute
bit-parallel operations on pairs 
of binary words of $w$ bits in constant time, e.g., using operations
 built into modern CPUs and GPUs.
There are actually different versions of the
{word-RAM} model 
\ifFull
(e.g., see~\cite{Hagerup1998}).
\else
(e.g., see~\cite{Hagerup1998,Hagerup2000}).
\fi
For example, we can consider a
\emph{restricted word-RAM} model~\cite{Hagerup1998}, 
where bit-parallel operations are limited
to addition, subtraction, and
the bit-parallel operations AND, OR, NOT, XOR, shift, and MSB 
(most-significant set bit).
\ifFull
We can also consider extensions to this model,
including a \emph{multiplication word-RAM} model, which
would also include constant-time multiplication,
and an \emph{AC$\,^0$ word-RAM} model, which would also include any
AC$\,^0$ operation, that is, any operation that can be computed with a 
constant-depth circuit with unbounded fan-in (which does not include
multiplication), e.g., see~\cite{Hagerup1998,Miltersen1996}.
\fi
In this paper, we provide results
for the restricted word-RAM model and also for a
\emph{permutation word-RAM} model (e.g., see~\cite{ALBERS199725}), which
\ifFull
can be viewed as somewhat weaker than the multiplication word-RAM,
in that the permutation word-RAM model 
\fi
can perform
a fixed permutation of $w/m$ subwords, each of size $m$, in constant time.
\ifFull
This is an operation supported by many modern CPUs and GPUs
(e.g., see~\cite{Hilewitz2008,ML03,SL03,SYL03,Yang99}).
As in the traditional RAM model, we analyze 
the running times of word-RAM algorithms
by counting the number of operations performed.
\fi
\ifFull
\subsection{Problem Formulation}

\fi
Formally, we assume we have a computational setting that 
consists of the following items:
\ifFull
\begin{itemize}
\item
\else

$\bullet$
\fi
A collection of data sets, $D_1,D_2,\ldots$, that
store \emph{items}, such that each item is associated
with a point along a one-dimensional curve, $\mathcal{C}$, which
is the same for all the data sets.
In the simplest case, the curve, $\mathcal{C}$, is just a straight line,
but we also allow $\mathcal{C}$ to be a space-filling curve, such
as a Hilbert curve or z-order curve.
\ifFull
E.g., see~\cite{sagan2012space} and
Figure~\ref{fig:curve}.
Such curves
are often used in spatial data indexing applications, 
e.g., see~\cite{ASANO19973,Lawder2000,Lawder:2001,Lee:2010,Liao01,Samet:1990:DAS:77589},
so as to reduce multi-dimensional approximate nearest-neighbor
and range queries to 1-dimensional range queries.
For example,
each $D_i$ could be a collection of items along a 1-dimensional
line or space-filling curve such that
each $D_i$ is associated with a certain category, such as
keyword matches for coffee shops or parking garages.
\fi
\ifFull
\item
\else

$\bullet$
\fi
A \emph{spatial multple-set intersection query} consists of 
an interval range, $\mathcal{R}\subseteq\mathcal{C}$, 
and a set of indices, $\mathcal{I}=\{i_1,i_2,\ldots,i_t\}$.
Each $i_j$ identifies a specific data set, $D_{i_j}$, e.g., based
on some keyword of interest.
The response to this query should be every item
that has a point in the interval range, $\mathcal{R}$, 
and belongs to the common intersection,
$
D_{i_1} \cap
D_{i_2} \cap
\cdots
\cap D_{i_t}$.
\ifFull
\end{itemize}
\fi

\ifFull
For instance, 
each $D_{i_i}$ could store
all the shops of a certain type in some district organized 
along a space-filling curve, $\mathcal{C}$,
so as to answer proximity queries that can
be expressed in terms of an interval range (or constant number of
interval ranges) along $\mathcal{C}$.
Then a spatial multiple-intersection query might ask 
for all the shops satisfying a set of
different keywords and be nearby some specified location, using an
interval-range query
that is amounts to a ``spatial join'' or ``distance join''
(e.g., see~\cite{Beckmann:1990,Hjaltason:1998,Huang:1997,WJ96}).
Alternatively,
in an even simpler scenario, 
$\mathcal{C}$ could just be the one-dimensional ``number line'' and
each $D_{i_j}$ could just be a set of 
items associated with points along this line.
Furthermore, the interval range, $\mathcal{R}$, could be the interval
$(-\infty,+\infty)$, in which case the problem becomes one of simply
returning the common intersection of a collection of sets, independent
of the curve, $\mathcal{C}$.
We are interested in answering such queries as quickly as possible,
given a reasonable amount of preprocessing.
\fi

\ifFull
\begin{figure}[hbt!]
\begin{center}
\includegraphics[width=2.3in]{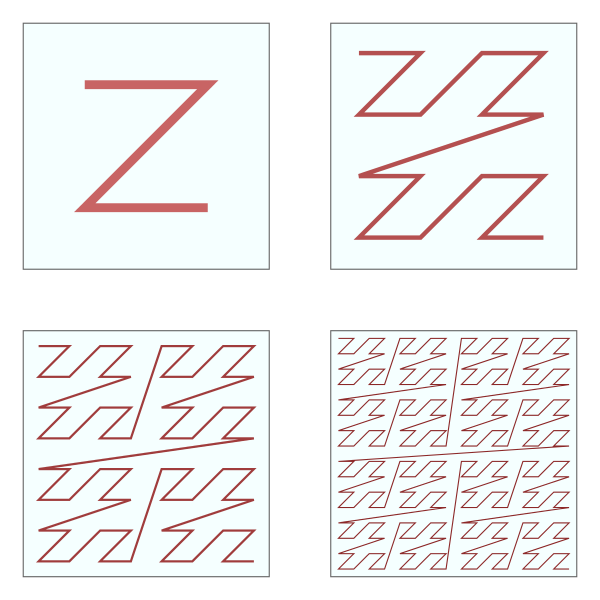}
\end{center}
\vspace*{-11pt}
\caption{An example z-order curve, at several resolutions,
which is related to the well-known quadtree spatial data structure.
Copyright~\copyright~David Eppstein, based on an image by Hesperian.
Licenced by CC A-S 3.0.
\label{fig:curve}}
\end{figure}
\fi

\subsection{Related Work}
In addition to work by
Miltersen~\cite{Miltersen1996} introducing the 
{practical RAM} model and work by Hagerup~\cite{Hagerup1998}
introducing the {word-RAM} model,
researchers have explored various algorithms
for versions of the practical RAM and word-RAM models,
\ifFull
e.g., see~\cite{Thorup:2003,Willard00,Fredman1993424,Andersson1999337}.
\else
e.g., see~\cite{Willard00}.
\fi
There is, for instance, considerable previous work on algorithms for answering 
set-intersection queries in the word-RAM model.
Ding and K{\"o}nig~\cite{Ding:2011}
show how to compute the common intersection of $t$
sets of total size $n$ in expected time
$O(n/\sqrt{w} + kt)$, 
where $w$ is the word size in bits and $k$ is the size of the output in words.
Bille {\it et al.}~\cite{Bille07} present 
a data structure that can compute the
intersection of $t$ sets of total size~$n$ in 
$O(n(\log^2 w)/w + kt)$ expected time in the permutation word-RAM model.
In addition,
Kopelowitz {\it et~al.}~\cite{Kopelowitz2015} introduce a 
data structure for computing set intersections
for two sets of roughly the same size, $n$, in
$O(n(\log^2 w)/w+\log w+k)$ expected time in this model.
Eppstein {\it et al.}~\cite{Eppstein2017} improve this bound to
$O(n(\log w)/w+k)$ expected time for the restricted word-RAM model,
using a data structure they call ``2-3 cuckoo hash-filters,'' 
 which consist of a combination
of 2-3 cuckoo hash tables and 2-3 cuckoo filters.
\ifFull
2-3 cuckoo hash tables use a generalization of the
\emph{power-of-two-choices}
paradigm~\cite{Mitzenmacher00thepower,Mitzenmacher01power,Lumetta07,Azar99}
to a \emph{two-out-of-three} paradigm.
That is, in a 2-3 cuckoo hash table, each item has three possible
pseudo-random places it can be stored and it is stored in two of 
them~\cite{Amossen11}.
\fi
Eppstein and Goodrich~\cite{EG17spaa} extend this contruction for 
pairs of sets of different sizes.
\ifFull
A {2-3 cuckoo filter}~\cite{Eppstein2017} parallels a 2-3 cuckoo
hash table storing the same items, except that we store only
a $\Theta(\log w)$-bit pseudo-random fingerprint, $f(x)$, for each item $x$
in the associated cuckoo filter.
\fi
Unfortunately, all of these previous uses of 2-3 cuckoo hash-filters
are limited to pairwise intersections of sets and they do not extend
to intersections of three or more sets or spatial queries.
Thus, the best previous algorithm for answering (standard) multiple-set intersection 
queries in the word-RAM model is due to Bille {\it et al.}, as mentioned above.
We are not familiar with any previous work in the word-RAM model 
for answering spatial multiple-set intersection queries.

\subsection{Our Results}
\ifFull
We present simple new data structures and algorithms for answering 
spatial multiple-set intersection queries in the word-RAM model based on 
using 2-3 cuckoo
hash-filters~\cite{Eppstein2017,EG17spaa}.
\begin{itemize}
\item
We show how to 
answer standard multiple-set intersection queries in
$O(n(\log w)/w\, +\, kt)$ expected time 
in the permutation word-RAM model,
where
$n$ is the total size of the $t$ sets involved in the query,
$w$ is the number of bits in a memory word, $k$ is the output size,
$t\le w^c$, and $c\ge 1$ is an arbitrary fixed constant.
This improves the asymptotic performance over the previous 
best multiple-set intersection method for the permutation word-RAM model, due to
Bille {\it et al.}~\cite{Bille07}, for $t$ being polynomial in $w$.
\item
We show how to 
answer standard multiple-set intersection queries in
$O(n(\log^2 w)/w\, +\, kt)$ expected time 
in the restricted word-RAM model,
where $n$ is the total size of the $t$ sets involved in the query,
$w$ is the number of bits in a memory word, $k$ is the output size,
$t\le w^c$, and $c\ge 1$ is an arbitrary fixed constant.
This matches the asymptotic performance of Bille {\it et al.}~\cite{Bille07},
for $t$ being polynomial in $w$,
but does so in a restricted word-RAM rather than the permutation word-RAM.
\item
\fi
We show how to 
answer spatial multiple-set intersection queries in
$O(n(\log w)/w\, +\, kt)$ 
expected time in the permutation word-RAM model,
or $O(n(\log^2 w)/w\, +\, kt)$ expected time 
in the restricted word-RAM model,
where $n$ is the total size of the $t\le w^c$ location-constrained 
subsets involved in the query, where $c\ge1$ is a fixed constant
and $k$ is the output size.
\ifFull
This is, to our knowledge, the first such result of its kind.
\end{itemize}
\fi

Our data structures and algorithms 
take advantage of 
a simple approach that exploits
bit-level 
parallelism by
packing information into memory subwords,
which allows us to represent small sets of size $O(w/\log w)$ with
2-3 cuckoo filters occupying 
$O(1)$ memory words, and intersect pairs of sets represented this way
in expected time
$O(1+k)$, where $k$ is the output size.
The main computational difficulty of using this approach is that the result of
such an operation is itself \emph{not} a 2-3 cuckoo filter.
Nevertheless,
unlike previous results that exploited 2-3 cuckoo filters for
intersection queries~\cite{Eppstein2017,EG17spaa},
we show in this paper how to restore the result of a pairwise
intersection query back to being a 2-3 cuckoo filter.
We then show that
such restoration operations 
allow us to achieve our results for 
spatial multiple-set
intersection queries.
For full details and proofs, please see the full version of this 
paper~\cite{GoodrichFull}.

\section{A Review of 2-3 Cuckoo Hash-Filters}

Let us
review the 2-3 cuckoo hash-filter data structure
of Eppstein {\it et al.}~\cite{Eppstein2017}.
Suppose we wish to represent a set, $S$, of $n$ items
taken from a universe such that each item can be stored
in a single memory word.
\ifFull
We assume throughout that $w\ge \log n$, as is standard.
\fi
We use the following components, $T$, $M$, $C$, and $F$.
\ifFull
(See Figure~\ref{fig:filters}.)
\fi
\begin{itemize}
\item
A hash table~$T$ of size $O(n)$,
using three pseudo-random hash functions $h_{1}$,
$h_{2}$, and $h_{3}$, which map items of $S$ to triples of distinct integers
in the range $[0,n-1]$.
Each item, $x$ in $S$, is stored, if possible, in two of the three possible
locations for $x$ based on these hash functions.
We refer to each $T$ as a 2-3 \emph{cuckoo hash table}.
\item
We also store a stash cache\ifFull~\cite{kirsch}\fi, 
$C$, of size $\lambda$, where $\lambda$ is bounded by a constant.
$C$ stores items for which it was not possible to store properly in 
two distinct locations in $T$\ifFull
 (e.g., due to collisions with other items)\fi.
We maintain each $C$ as an array of size $O(\lambda)$.
\item
A table, $F$, having $O(n)$ cells, that parallels $T$, so that $F[j]$
stores a non-zero 
fingerprint digest, $f(x)$, for an item, $x$, if and only if $T[j]$
stores a copy of $x$.
The digest $f(x)$ is a non-zero random hash function 
comprising $\delta$ bits, where $\delta=c\log w$ is a parameter
chosen to achieve a small false positive rate.
The table $F$ is called a 2-3 \emph{cuckoo filter}, and it is
stored in a packed format, so that we store $O(w/\log w)$ cells of
$F$ per memory word. 
In addition to the vector $F$, we store a bit-mask, $M$, that is
the same size as $F$ and has
all 1 bits in the corresponding cell of each occupied cell of $F$.
\end{itemize}
\ifFull
We assume we can read and write individual cells of $F$ and $M$
in $O(1)$ time.
These cells amount to subfields of words 
of $\delta=O(\log w)$ size, which can be read from or written to
using standard bit-level operations in the restricted word-RAM model.
Thus,
we assume that reading or writing 
any individual cell of $F$ takes $O(1)$ time in the 
restricted word-RAM model.
\fi

\ifFull
\begin{figure}[htb]
\begin{center}
\includegraphics[width=3.4in, trim = 1.3in 3.4in 1.6in 1.0in, clip]{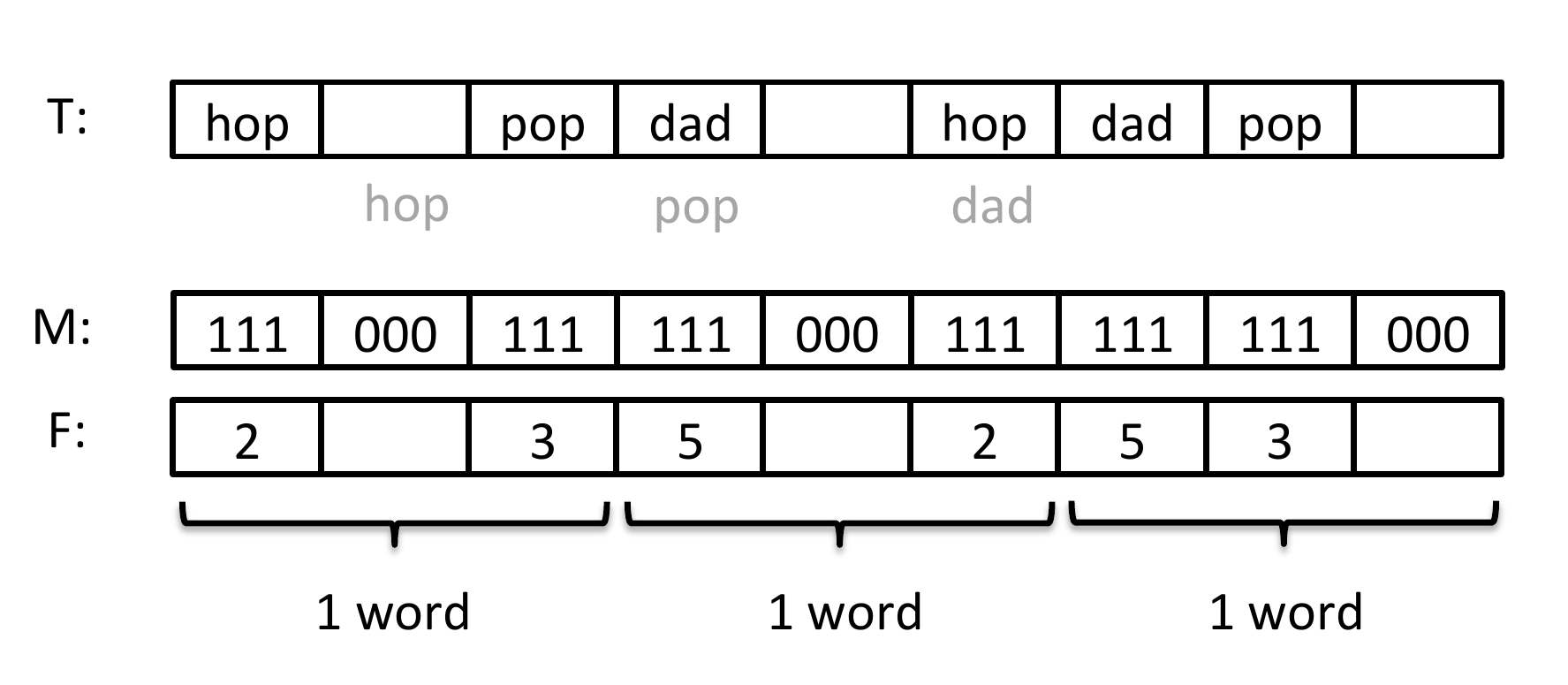}
\end{center}
	\vspace*{-18pt}
\caption{\label{fig:filters} A simplified example of 
a 2-3 cuckoo hash table, $T$, and filter, $F$, including the bit-mask,
$M$.
In this example, we are representing the set, 
$S=\{\mbox{\sffamily hop, pop, dad}\}$, in an instance
of the practical RAM model that can store three fingerprint values per word.
In this case, the filters are defined so
$f(\mbox{\sffamily hop})=2$,
$f(\mbox{\sffamily pop})=3$,
and
$f(\mbox{\sffamily dad})=5$.
Note: every item is stored in two out of three locations; we are showing
the third potential location for each item below its third location in grey.
In this figure, we are not showing the stash cache, $C$, 
which in this simplified example would be empty.
Also, note that if $n$ is $O(w/\log w)$, then,
although the size of the 2-3 cuckoo table, $T$, is $O(n)$,
the size of the corresponding 2-3 cuckoo filter, $F$, is $O(1)$.
Copyright~\copyright~Michael Goodrich.
}
\end{figure}
\fi

\ifFull
Since the method for constructing a 2-3 cuckoo filter, $F$, is the same as
that for a 2-3 hash table, $T$, that is parallel to it, without loss
of generality, let us describe how to construct $T$.
We assume we have $n$ items that need to be added to $T$ and that
$T$ has size that is at least $6(1+\epsilon)n$, for a constant $\epsilon>0$.
We also assume that we have a stash cache, $C$, of constant size,
$\lambda$.
Eppstein {\it et al.}~\cite{Eppstein2017} describe an algorithm
for constructing a 2-3 cuckoo hash-filter, which we review below.
Our algorithm for constructing a 2-3 cuckoo hash-filter, then,
involves performing $n$ insertions into our structures.
We begin with performing those insertions in $T$, so as to construct
a 2-3 cuckoo hash table.

Suppose we are inserting an item, $x$.
We first attempt to insert $x$ into the 2-3 cuckoo table, $T$,
performing a 2-3 cuckoo (two-out-of-three) insertion for $x$.
Following Amossen and Pagh~\cite{Amossen11}, let us consider
this as our inserting two instances of $x$ into $T$ via a
one-out-of-three cuckoo insertion.
That is,
each item $x$ has three possible locations,
$T[h_{1}(x)]$,
$T[h_{2}(x)]$,
and
$T[h_{3}(x)]$,
where it may be stored.
If one of these is empty, then we add $x$ to it, completing that instance
of inserting $x$. 
If this is the second insertion for $x$, then we are done inserting
$x$.
If none of these three cells is 
empty, we choose one of them at random and add $x$ to it, evicting its previous occupant~$y$.
We add $y$ to a temporary buffer queue, $Q$.

We then process the buffer, $Q$, while it is non-empty.
We take the next item, $y$, from $Q$.
We then read the cells,
$T[h_{1}(y)]$,
$T[h_{2}(y)]$,
and
$T[h_{3}(y)]$.
One of these cells may already store $y$; in this case, we consider the other two.
If one of these two is empty, we add $y$ to it, and we are done with~$y$.
If both of these cells are occupied, however, we choose one of them
at random, evict its previous occupant~$z$, and insert~$y$ into that
cell.
Then we add $z$ to $Q$.
We repeat this processing of $Q$ until we either succeed in emptying
all the items in $Q$ or we reach a \emph{stopping condition},
which is defined to be the condition that we have spent more than $L$
iterations processing $Q$ during this insertion, where $L$ is a
threshold parameter set in the analysis.
If this stopping condition occurs,
then we remove from $T$ each copy of an item in $Q$, 
and we add each such item to the stash, $C$.
This step takes $O(L+1)$ steps in the practical RAM model (possibly amortized,
if we are implementing $C$ as a standard growable table).
As we show, we can set $L$ to be logarithmic in $n$
and set a constant threshold $s>0$ such that, 
with high probability throughout the process, $\lambda\le s$.
Thus, with high probability, our 2-3 cuckoo hash filters has stashes of constant size.
This property provides a ``safety net'' for our data structures:
with low probability, they could always
fall back to a standard hash table or sorted linked list to store their
sets, but with high probability, our structures will be much faster than this.
If the construction of $T$ succeeds, then we create a parallel (i.e.,
mirrored) copy
of $T$ as a hash filter, $F$, by replacing each item with its
fingerprint of size $\delta=O(\log w)$ and compressing every
block of size $O(w/\log w)$ words in $T$ into a single work for $F$.
If the construction of $T$ fails, however, then we instead sort the
items of the original set, $S$,
of $n$ items according to some ranking function (e.g., page rank)
and just use this sorted copy of $S$ to represent $S$.
We say that the 2-3 cuckoo hash-table for $S$ ``failed''
if we did not successfully construct $T$ using the above algorithm.
\fi

\ifFull
For completeness, we include an analysis of 2-3 cuckoo hash-tables in an appendix,
showing that we can build such a data structure of size, $n$, with
a stash of size $s$, in expected $O(n)$ time and that the probability 
that this construction fails is at most $\tilde O(n^{-s})$, where
the $\tilde O(\cdot)$ notation ignores polylogarithmic factors.
\else
One can show that one can build a 2-3 cuckoo hash-table of size, $n$, with
a stash of size $s$, in expected $O(n)$ time and that the probability 
that this construction fails is at most $\tilde O(n^{-s})$, where
the $\tilde O(\cdot)$ notation ignores polylogarithmic 
factors~\cite{Eppstein2017}.
\fi

\section{Data Structures}
\ifFull
In this section, we describe our 
data structure framework.
Recall that we assume that we are given
a collection of data sets, $D_1,D_2,\ldots$, that
store items, such that each item is associated
with a point along a one-dimensional curve, $\mathcal{C}$, which
is the same for all the data sets.
In addition, we assume that any
spatial multple-set intersection query consists of 
an interval range, $\mathcal{R}\subseteq\mathcal{C}$, 
and a set of indices, $\mathcal{I}=\{i_1,i_2,\ldots,i_t\}$, and
the response to this query should be every item
that has a point in the range, $\mathcal{R}$, and belongs to the common intersection,
$
D_{i_1} \cap
D_{i_2} \cap
\cdots
\cap D_{i_t}$.
\fi

As a starting point for our data structure construction, 
we subdivide each data set, $D_i$,
into a sequence of interval regions along the curve, $\mathcal{C}$,
such that each region stores $\Theta(w/\log w)$ points.
Our data structure construction, then, is as follows.
	For each interval region, $R$, in one of our 
        structures, $D_i$, use the algorithm given above to
		construct a 2-3 cuckoo hash-filter for 
        the $\Theta(w/\log w)$ points in $R$ with a stash of
		constant size, $\lambda$. If the construction for $R$
		fails, then simply fallback to representing this subset 
		as a sorted listing of its items (according to some
       canonical ordering).
The resulting set of interval regions and the 2-3 cuckoo hash-filter 
or sorted list for each
region comprises our representation for this structure.
In spite of this failure possibility (which amounts to our using
a standard list-based subset representation as a fallback), our construction
has the following property.

\begin{lemma}
	\label{lem:fragile-size}
	Let $S$ and $T$ be sets of consecutive interval 
regions from two data 
sets, $D_i$ and $D_j$,
and let $n$ denote the number of points in all the regions of 
$S$ and $T$.
Let $\alpha$ denote
the number of interval region pairs, $(R_i,R_j)$, such that
$R_i$ and $R_j$ overlap and our 2-3 cuckoo hash-filter construction failed
for either $R_i$ or $R_j$. Then 
\ifFull
\[
{\mathbf E}[\alpha] \le \frac{2n\log w}{w^{s+1}},
\]
\else
${\mathbf E}[\alpha] \le \frac{2n\log w}{w^{s+1}}$,
\fi
where $s\ge 2$ is a chosen constant.
\end{lemma}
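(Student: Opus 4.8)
The plan is to apply linearity of expectation to the indicator variables of the bad pairs, bound each term by a union bound over its two regions, and then exploit the fact that overlaying two runs of \emph{consecutive} interval regions produces only linearly many overlapping pairs. Write $p=|S|$ and $q=|T|$. Observe first that which pairs $(R_i,R_j)$ overlap is determined by the input alone: each $D_\ell$ is cut into blocks of $\Theta(w/\log w)$ consecutive points, so the partition into regions is fixed once the data is fixed; only the events ``the 2-3 cuckoo hash-filter construction failed for $R$'' are random, and nothing below will require these events to be mutually independent. Letting $B$ be the set of overlapping pairs $(R_i,R_j)$ with $R_i\in S$ and $R_j\in T$, linearity of expectation together with a union bound gives
\begin{align*}
{\mathbf E}[\alpha]
&= \sum_{(R_i,R_j)\in B}\Pr[R_i\text{ or }R_j\text{ failed}]\\
&\le \sum_{(R_i,R_j)\in B}\bigl(\Pr[R_i\text{ failed}]+\Pr[R_j\text{ failed}]\bigr)
\;\le\; 2\,P_{\max}\,|B|,
\end{align*}
where $P_{\max}$ upper-bounds the failure probability of a single region's construction; the last step regroups the double sum by $S$-regions and by $T$-regions and uses that the degrees on each side of the bipartite overlap graph sum to $|B|$.

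It remains to bound $|B|$ and $P_{\max}$. For $|B|$: since $S$ and $T$ are runs of \emph{consecutive} regions, the regions of $S$ tile a contiguous sub-curve of $\mathcal{C}$, and likewise for $T$, so listing both families in their sorted order along $\mathcal{C}$ the overlap relation is a monotone ``staircase'', whence $|B|\le p+q$ (a monotone lattice path in a $p\times q$ grid visits at most $p+q-1$ cells). Moreover the $p+q$ regions are pairwise disjoint within their own data set and each contains at least $w/\log w$ points (within the $\Theta(w/\log w)$ budget), and all of these points are counted in $n$; hence $p+q\le n(\log w)/w$.

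For $P_{\max}$: a region holds $m=\Theta(w/\log w)$ points, so by the 2-3 cuckoo construction bound reviewed above it fails with probability at most $\tilde O(m^{-s'})$ when the stash/exponent parameter is set to $s'$, and for $m=\Theta(w/\log w)$ this equals $\mathrm{poly}(\log w)\cdot w^{-s'}$; choosing $s'$ as a function of the target constant $s$ (e.g.\ $s'=s+1$, so that $\mathrm{poly}(\log w)\cdot w^{-s-1}\le w^{-s}$ for all large $w$) gives $P_{\max}\le w^{-s}$. Combining the three bounds, ${\mathbf E}[\alpha]\le 2P_{\max}|B|\le 2w^{-s}\cdot n(\log w)/w = 2n(\log w)/w^{s+1}$, as claimed. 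The crux is the estimate $|B|=O(p+q)$ — the naive count is $O(pq)$, and it is precisely the hypothesis that the regions are consecutive that forces the overlap graph to be a staircase — while the only other mildly delicate point is the (harmless) absorption of the $\mathrm{poly}(\log w)$ factor from the $\tilde O(\cdot)$ failure bound into a slightly larger stash parameter.
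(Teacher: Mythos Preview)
Your argument is correct and follows the same skeleton as the paper's proof: linearity of expectation plus a union bound over the two regions in each overlapping pair, a bound of $w^{-s}$ on a single region's failure probability (obtained by choosing the stash parameter large enough to absorb the polylogarithmic factor coming from $m=\Theta(w/\log w)$), and a bound of $n(\log w)/w$ on the number of overlapping pairs. You are in fact more explicit than the paper on two points --- you spell out the ``staircase'' reason why two runs of consecutive regions yield at most $p+q$ overlapping pairs, and you note that the overlap pattern is deterministic so no independence assumptions are needed --- whereas the paper simply asserts the overlap count and writes the calculation $\sum_{i,j}\chi_{i,j}\bigl(X(R_i)+X(R_j)\bigr)\le (n\log w/w)\cdot 2w^{-s}$ directly.
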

\ifFull
\begin{proof}
Since each interval region in our data structures contains
$\Theta(w/\log w)$ points,
\ifFull
we can use Theorem~\ref{thm:failure},
from the analysis given in the appendix,
\else
we can use the previous analysis~\cite{Eppstein2017}
\fi
	to determine a stash constant, $\lambda$, for
	each 2-3 cuckoo hash-filter so that its failure
	probability is at most 
	\[
		\lceil w/\log w\rceil^{-3s} \le w^{-s},
	\]
	for $w\ge 8$, which follows from the fact that $w\ge \log n$.
Let $\chi_{i,j}$ be an indicator variable that is $1$ if and only if
an interval region $R_i$ from $S$ overlaps an interval region $R_j$ from $T$, and let
$X(R)$ be a random variable that is $1$ if and only if our 2-3 construction
for the region $R$ failed.
Then, by the linearity of expectation,
\begin{eqnarray*}
{\mathbf E}[\alpha] &\le& \sum_{i,j} \chi_{i,j}(X(R_i)+X(R_j)) \\
&\le& \frac{n\log w}{w} \cdot \left(\frac{1}{w^s} + \frac{1}{w^s}\right) \\
&=& \frac{2n\log w}{w^{s+1}}.
\end{eqnarray*}
\end{proof}

	Thus, the 
	total expected number of items in all pairs of
regions such that one of the two regions has a failed 2-3 cuckoo hash-filter
	is at most proportional to
	\[
		\frac{n\log w}{w} \cdot \frac{w}{\log w} \cdot \frac{1}{w^s}.
	\]
\fi
	That is, the total expected 
	number of points summed across all pairs of overlapping
regions where one of the regions has a failed 2-3 cuckoo hash-filter is
	$O(n/w^s)$.
\ifFull
In other words, since we are choosing $s\ge2$, we can use a fallback 
set-intersection method based on merging sets sorted according to some canonical 
order and the expected cost of all such intersections will be $O(n/w^s)$.
\fi

\ifFull
There is still one more component to our construction, which is a set of
structures for each region that we call \emph{cuckoo-restore} structures,
which allow us to restore an intersection representation
to be a 2-3 cuckoo hash-table for any interval. 
Since the motivation 
for these structures depends on the method for doing a pairwise intersection
computation, let us postpone our discussion of the cuckoo-restore structures
until after we have given our pairwise intersection algorithm.
\fi

\section{Intersection Algorithms}
\label{sec:intersect}
In this section,
we describe our algorithm for performing 
spatial multiple-set intersection queries.

\subsection{Intersecting Two 2-3 Cuckoo Hash-Filters}
Let us begin by reviewing the method of 
Eppstein {\it et al.}~\cite{Eppstein2017}
for intersecting a pair of 2-3 cuckoo hash-filters, which 
in our case will always be represented using $O(1)$ memory words
for the filter component, since interval regions in our
structures hold $\Theta(w/\log w)$ items.

Suppose then that we have two subsets, $S_i$, and $S_j$, of size $O(w/\log w)$
each for which we wish to compute a representation of the intersection
$S_i\cap S_j$.
Suppose further that $S_i$ and $S_j$ are each represented with 2-3
cuckoo hash-filters of the same size and using the same three hash
functions and fingerprint function.
We begin our set-intersection algorithm
for this pair of 2-3 cuckoo hash-filters
by computing a vector of $O(1)$ words
that identifies the matching non-empty cells in
$F_i$ and $F_j$.
For example, we could compute the vector defined by
the following bit-wise vector expression:
\begin{equation}
A = (M_i ~ {\rm AND ~ NOT} ~ (F_i ~ {\rm XOR} ~ F_j)).
\label{eq:filter}
\end{equation}
We view $A$ as being a parallel vector to $F_i$ and $F_j$.
Note that a cell, $A[r]$, consists of $\delta$ bits
and this cell is all 1s if and only if $F_i[r]$ stores a
fingerprint digest for some item and $F_i[r]=F_j[r]$,
since fingerprint digests are non-zero.
Thus, with standard operations in the word-RAM we can compute a
compact representation of $O(1)$ words that stores each subword
for each fingerprint that matches in $F_i$ and $F_j$, with the matching
locations identified by all 1's in a mask, $M_i'$.
The crucial insight is that since each item is stored in two-out-of-three
locations in a 2-3 cuckoo hash-filter, if the same item is stored in two
different cuckoo hash-filters, then it will be stored in 
one of its three locations in both hash-filters. It is this common location
that then stores the filter for this item after we perform the bit-parallel
operations to compute this intersection.

Thus, if we are interested in just this pairwise intersection,
we can create a list, $L$,
of members of the common intersection of $S_i$ and $S_j$,
by visiting each word of $A$ and storing to $L$ the item in
$T_i[r]$ corresponding to each cell, $A[r]$, that is all 1s, but
doing so only after confirming that $T_i[r]=T_j[r]$.
In addition, since we are assuming that stashes are of constant size,
we can do a lookup in the other hash table for each item in a stash
for $S_i$ or $S_j$, which takes an additional time that is $O(1)$.
Therefore,
the listing of the members in $S_i\cap S_j$
can be done in time $O(1+k+p)$, where $k$ is the 
number of items in the intersection and $p$ is the number of 
false positives (i.e., places where fingerprints match but the item
is not actually in the common intersection), 
by using standard and bit-parallel operations in the word-RAM
model (e.g., see also~\cite{Eppstein:2016}).

Note that the additional step of weeding out false positives can be done
at the end, which involves
a constant-time operation
per item in the list,
that itself involves 
a lookup in the two cuckoo hash tables, to remove items that
map to the same locations and have the same fingerprint digests 
but are nevertheless different items.
\ifFull
That is, we remove from this list any items, $x$ and $y$,
that happen to map to the same cell, $r$, in their respective 2-3 cuckoo
hash tables and they also have the same fingerprint digest, that is,
$f(x)=f(y)$.
\fi
Note that by requiring $\delta=c\log w$, we can guarantee that the
probability of such false positives is at most $1/w^c$.
\ifFull
Figure~\ref{fig:sets} illustrates a simplified 
example of this pairwise intersection algorithm.

\begin{figure}[htb!]
	\vspace*{-2pt}
\begin{center}
\includegraphics[width=3.2in, trim = 1.2in 1.5in 1.7in 1.5in, clip]{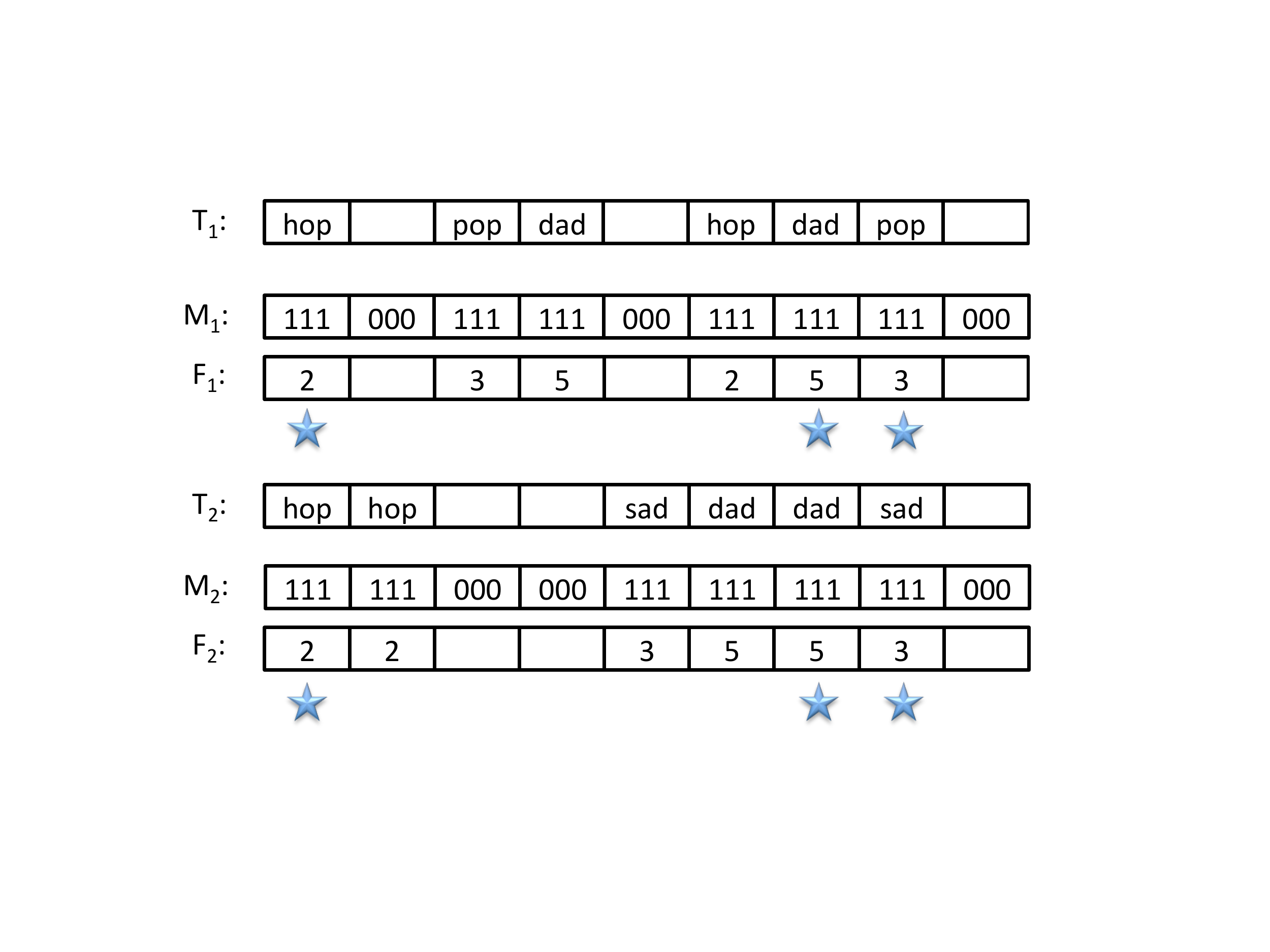}
\end{center}
	\vspace*{-20pt}
\caption{\label{fig:sets} An example of 
parallel 2-3 cuckoo hash tables and filters for two sets.
In this example, we are representing the set, 
$S_1=\{\mbox{\sffamily hop, pop, dad}\}$, 
and the set,
$S_2=\{\mbox{\sffamily hop, dad, sad}\}$.
The intersection algorithm does bit-parallel operations to find the matching
non-zero cells in $F_1$ and $F_2$, and then checks the matches found using
$T_1$ and $T_2$, to identify the set of intersecting items, 
$\{\mbox{\sffamily hop, dad}\}$.
In this case the filters are defined so
$f(\mbox{\sffamily hop})=2$,
$f(\mbox{\sffamily pop})=3$,
$f(\mbox{\sffamily dad})=5$,
and
$f(\mbox{\sffamily sad})=3$.
We show the matching filter locations with stars, including
the false positive match for
$\mbox{\sffamily dog}$
and
$\mbox{\sffamily fox}$,
which would be culled when we check the potential
matches against $T_1$ and $T_2$.
In this figure, we are not showing any stash caches.
Copyright~\copyright~Michael Goodrich.
}
\end{figure}
\fi

The running time of our entire algorithm for computing the intersection
of $S_i$ and $S_j$, therefore, is
$
O(1 + k + p)$,
where $k$ is the size of the output and $p$ is the number of false positives.
In addition, by choosing $\delta=c\log w$,
we can bound the probability that two different items have the same
fingerprint value as being at most $1/w^c$.
Thus, ${\bf E}[p]\le n/w$, since each item is stored in at most two places
in a 2-3 cuckoo hash table.
\ifFull
Therefore, we have the following.

\begin{theorem}
\label{thm:intersect}
Let $S_i$ and $S_j$ be two subsets of size $\Theta(w/\log w)$
each, represented by
using 2-3 cuckoo hash-filters with constant-size stashes. 
Then we can compute the intersection $S_i\cap S_j$ in 
$O(1 + k)$ expected time,
in the restricted word-RAM,
where $k$ is the size of the intersection.
\end{theorem}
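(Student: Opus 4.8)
The plan is to assemble the pieces developed in the discussion above into a single running-time bound, so the proof is essentially an accounting argument. First I would note that since each of $S_i$ and $S_j$ holds $\Theta(w/\log w)$ items and its 2-3 cuckoo filter packs $\Theta(w/\log w)$ cells of $\delta = c\log w$ bits into each memory word, the filter tables $F_i, F_j$ and their companion masks $M_i, M_j$ each occupy only $O(1)$ memory words. Hence evaluating the bit-parallel expression $A = (M_i\ {\rm AND\ NOT}\ (F_i\ {\rm XOR}\ F_j))$ of Eq.~(\ref{eq:filter}), together with the standard word-RAM compaction that produces a packed vector of the matching subwords and the accompanying mask $M_i'$, costs only $O(1)$ time.

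Next I would bound the cost of materializing the output list $L$. Scanning the $O(1)$ words of $A$ and, for each all-ones cell $A[r]$, verifying that $T_i[r] = T_j[r]$ before emitting the stored item costs $O(1)$ per reported candidate; the candidates consist of the $k$ true members of $S_i \cap S_j$ plus $p$ false positives (cells whose fingerprint digests coincide while the stored items differ), so this phase is $O(1 + k + p)$. The constant-size stashes $C_i$ and $C_j$ contribute $O(\lambda) = O(1)$ extra lookups, and the final false-positive culling pass removes each spurious candidate with two hash-table probes, again $O(1)$ per candidate, so it is absorbed into the $O(1 + k + p)$ bound.

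The only place where probability enters, and the only point requiring care, is the bound $\mathbf{E}[p] = O(1)$. Here I would use that the fingerprint digest is a random value of $\delta = c\log w$ bits, so any two distinct items agree in their digests with probability at most $2^{-\delta} = 1/w^c \le 1/w$. Since each item of $S_i \cup S_j$ occupies at most two cells of its 2-3 cuckoo table, the expected number of cells that are jointly occupied by distinct items with equal digests is at most $O(n)/w = O(w/\log w)/w = O(1)$, so $\mathbf{E}[p] = O(1)$ (indeed $o(1)$). By linearity of expectation the total expected running time is $O(1 + k + \mathbf{E}[p]) = O(1 + k)$.

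I do not expect a genuine obstacle: every ingredient is already in place. The one conceptual subtlety — that whenever an item belongs to both sets it is stored two-out-of-three times in each filter and therefore shares at least one common cell, so no true positive is ever missed by the match vector $A$ — has already been argued in the text preceding the theorem, and everything else is a routine word-RAM time accounting.
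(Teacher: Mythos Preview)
Your proposal is correct and follows essentially the same approach as the paper: the theorem is stated as a direct consequence of the preceding discussion, and your accounting of the $O(1)$ cost for the bit-parallel operations on $O(1)$-word filters, the $O(1+k+p)$ listing cost, the $O(1)$ stash lookups, and the bound ${\bf E}[p]\le n/w = O(1/\log w)$ mirrors exactly what the paper develops in the text leading up to the theorem.
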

\fi

\subsection{Answering Pairwise Spatial Set-Intersection Queries}
Let us next describe our algorithm for answering
a spatial two-set intersection query, asking for the intersection of two sets,
$S_1$ and $S_2$, of possibly different sizes.
Let us assume that $S_1$ and $S_2$ are each represented using
the structures as described above, that is, $S_1$ and  $S_2$ 
are subdivided into interval regions, such that,
for each interval region, we have done our construction of a 2-3 cuckoo
hash-filter (or, if that failed, then we have a sorted list of the items
in that interval).
In addition, we assume that we also 
are representing each entire set, $S_i$, for $i=1,2$,
using a standard hash table, $H_i$, such as a cuckoo hash table.
This hash table, $H_i$, will allow us to cull false positives as a post-processing
step.

Given a query interval range, $\mathcal{R}\subseteq\mathcal{C}$,
we first cull from $S_1$ and $S_2$ all the intervals that do not intersect
$\mathcal{R}$.
To allow for easier analysis of our method
for computing the intersection of $S_1$ and $S_2$,
let $n_1$ denote the number of items remaining in $S_1$ and let $n_2$ 
denote the number of items remaining in $S_2$, and let $n=n_1+n_2$.
Note that, since
the number if items in each remaining interval region in either $S_1$ 
or $S_2$ is $O(w/\log w)$,
the number of intervals in each $S_i$ is $O(n_i(\log w)/w)$, for $i=1,2$.
In addition, we assume that
we are representing these interval regions so 
that have an easy way of identifying when two regions overlap.

The goal of our algorithm 
is to compute a cuckoo hash-filter representation that contains
all the items in $S_1\cap S_2$, plus possibly some false positives
that our algorithm identifies as high-probability items belonging
to this common intersection.
\ifFull
After we have performed the core part of algorithm, then, we 
can simply do a lookup in each $H_i$ 
to confirm which items actually
belong to the common intersection (and should be produced in an output response)
and which items should be 
ignored because they are false positive members of the
common intersection.
\fi

Because of the way our algorithm works, it produces a cuckoo-filter
representation for the common intersection, where $S_1\cap S_2$ is represented
in terms of the intervals in $S_2$ (or, alternatively, we can swap the two
sets and our output will be in terms of the intervals of $S_1$).
Namely, for each such
interval, $I$, we will have a cuckoo
filter, $F_I$, and a backing 2-3 cuckoo table, $T_I$, where, for each 
non-zero fingerprint, $F_I[j]$, 
there is a corresponding item, $x=T_I[j]$, with
$x$ being a confirmed item in $S_2$.
The cuckoo filter, $F_I$, may not be a 2-3 cuckoo filter, however, because
each item might be stored in just one location in $F_i$, not two.
(We explain later how we can repair this situation to quickly build a 2-3 cuckoo
filter representing the pairwise intersection, albeit possibly with a small number
of false positives that can be removed in a post-processing step.)
Nevertheless, after our core algorithm completes,
our representation allows us to examine each non-zero cuckoo filter
fingerprint in a filter $F_I$, lookup its corresponding item, $x$, 
in a backing 2-3 cuckoo table, $T_I$, and
then perform a search for $x$ in the hash table for the other set (e.g.,
$S_1$) to verify that it belongs to the common intersection or is a false
positive.
That is, after one additional lookup 
for each such candidate intersection item, $x$, we can confirm 
or discard $x$ depending on whether it is
or isn't in the common intersection and with one more comparison.

There is also a possibility that our construction of a 2-3 cuckoo hash-filter
fails for some interval, in which case we fallback to a standard intersection
algorithm, such as merging two sorted lists, or looking up each item in 
one set 
in a hash table for the other. Since such failures occur with probability
at most $1/w^s$, for some constant $s$, the time spent on such fallback 
computations is dominated by the time for our other steps; hence, let us
ignore the time spent on such fallback computations.

Our algorithm
for constructing the representation of $S_1\cap S_2$,
using 2-3 cuckoo hash-filters,
and then optionally culling out false positives,
is as follows.
\ifFull
(See Figure~\ref{fig:intersect}.)

\begin{figure}[htb]
\vspace*{-10pt}
\begin{center}
\includegraphics[width=3.5in, trim = 0.5in 1.0in 1.5in 0.4in, clip]{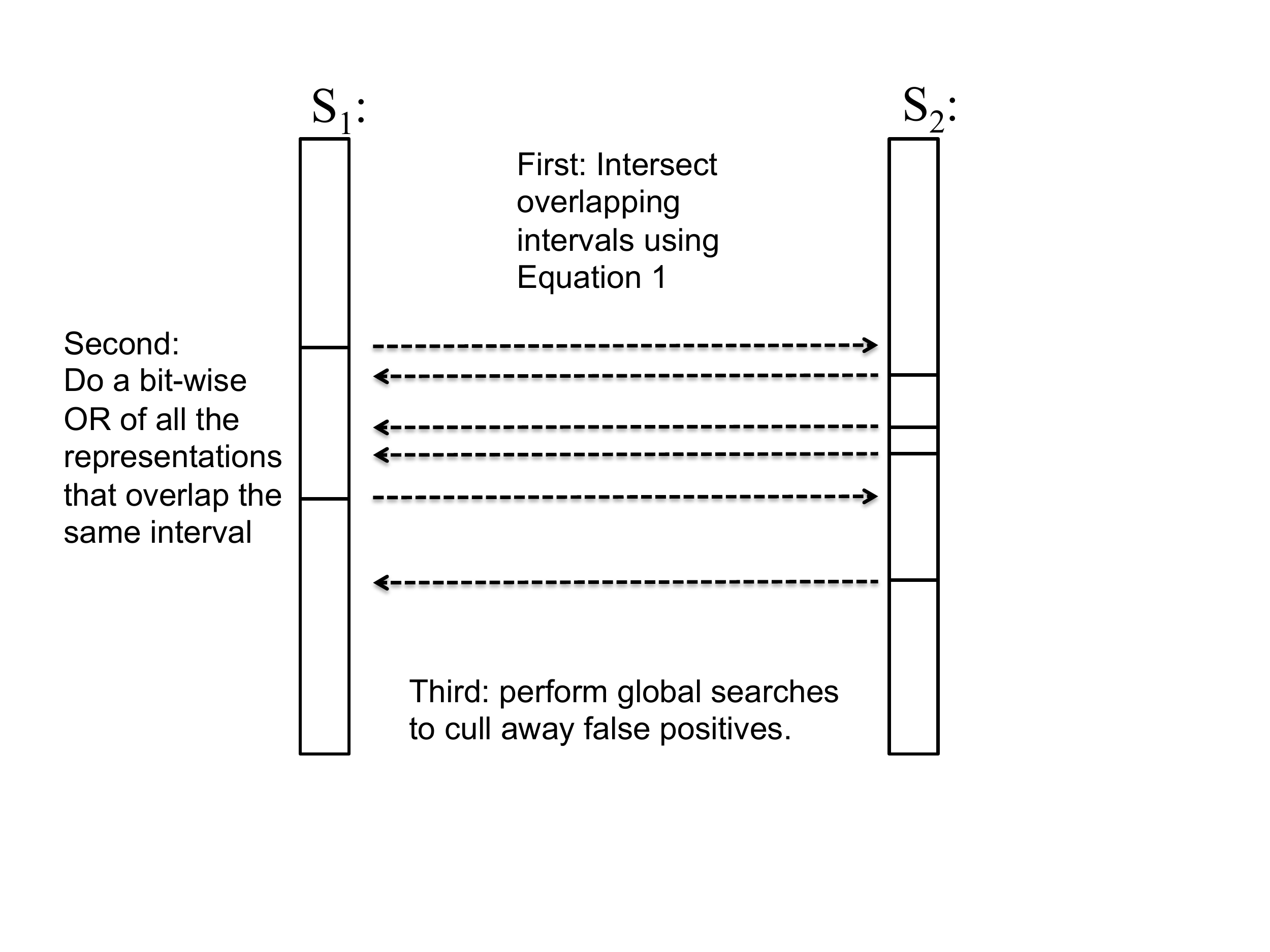}
\end{center}
\vspace*{-18pt}
\caption{\label{fig:intersect} An illustration of
	how we answer set-intersection queries for 
	a pair of sets subdivided into one-dimensional interval regions.
Copyright~\copyright~Michael Goodrich.}
\end{figure}
\fi

\begin{enumerate}
	\item
		Merge the interval regions
		of $S_1$ and $S_2$, to identify each pair of
   overlapping interval regions. 
		This step can be done in 
		$O(n(\log w)/w)$ time,
		by Lemma~\ref{lem:fragile-size}.
	\item
		\label{step:and1}
		For each overlapping interval, $I_{1,j}$ and $I_{2,k}$, where
		$I_{1,j}$ is from $S_1$ and $I_{2,k}$ is from $S_2$, 
		intersect these two subsets using the bit-parallel intersection 
		algorithm for 2-3 cuckoo filters 
		derived using Equation~\ref{eq:filter} above
		(but skipping the lookups in
		the corresponding 2-3 cuckoo table)
		Also perform lookups for
		any items in stashes (which are confirmed as
		intersections; hence, we are done with our computations
		for them).
		Let $F_{j,k}$ denote the resulting (now partial) 2-3 cuckoo
		filter.
		This step can be implemented
		in $O(n(\log w)/w)$ time, 
		since the total number of overlapping pairs
		of intervals is $O(n(\log w)/w)$.
	\item
		\label{step:and2}
		For each interval, $I_{1,j}$, collect all the 
		partial 2-3 cuckoo
		hash-filters, $F_{j,k}$, computed in the previous step
		for $I_{1,j}$.
		Compute the bit-wise OR of these filters. Let $F_j$ denote
		the resulting partial 2-3 cuckoo filter for $I_{1,j}$,
		and let ${\cal F}$ denote the collection of all such
		filters (note that there is potentially a non-empty
		partial cuckoo filter, $F_j$, for each interval
		in $S_1$).
		Also note that there are no collisions in the bit-wise
		OR of all these cuckoo filters, since each is computed as an
		intersection of a disjoint set of other items with the items
		in this interval.
		This step can be implemented
		in $O(n(\log w)/w)$ time, 
		since the total number of overlapping pairs
		of intervals is $O(n(\log w)/w)$.
	\item
		\label{step:and3}
		For each interval $I_{1,j}$, and
		each item, $x$, in the 2-3 cuckoo hash table 
		for $I_{1,j}$ that has a corresponding fingerprint belonging
		to $F_j$ in $\cal F$, 
		do a lookup in each of corresponding backing hash tables
(we assume we have global lookup tables for $S_1$ and $S_2$)
		to determine if $x$ is indeed
		a common item in the sets $S_1$ and $S_2$.
		Let $Z$ denote the set of all such items so determined
		to belong to this common intersection.
		This step can be implemented in 
		$O(n(\log w)/w+k+p)$ time, where $k$ is the size
		of the output and $p$ is the number of false
		positives, that is, items that have a non-zero
		fingerprint in some $F_j$ but nevertheless are not in the
		common intersection, $S_1\cap S_2$.
\item
	Output the members of the set, $Z$, as the answer.
\end{enumerate}

\ifFull
Let us consider the correctness of this algorithm.
First, note that
since each item included in the final output, $Z$, 
is confirmed to belong to the
common intersection, $S_1\cap S_2$, there are no false
positive items reported, if we do the optional culling step
(i.e., there are no reported items that
are not in the common intersection).
Thus, we need only show that each item in the common intersection is
added to $Z$.
The only possible way we could miss a member, $x$, 
of the common intersection
is if $x$ belongs to intervals in both $S_1$ and $S_2$, and we failed
to add $x$ to $Z$.
In this case, $x$ has a matching fingerprint in at least one common
location in the ultra-compact 2-3 cuckoo filter for two overlapping
intervals, which we determine in Steps~\ref{step:and1} 
and~\ref{step:and2}, and we confirm and add
to $Z$ any such overlapping items belonging to $S_1\cap S_2$
in Step~\ref{step:and3}.
Thus, the above algorithm is correct.
\fi

Let us analyze the running time of this algorithm.
We have already accounted for the time bounds for each step above.
Furthermore, note that 
the parameter $p$ is $O(n/w)$, since the probability of two
fingerprints of size $2\log w$ collide is at most $1/w$.
Therefore, the total expected time for our algorithm
is $O(n(\log w)/w + k)$. 
\ifFull
This gives us the following.

\begin{theorem}
	Given two sets, $S_1$ and $S_2$, represented with interval  regions
intersecting a query range $\mathcal{R}\subseteq \mathcal{C}$,
	with each region represented as described above, one can compute a listing
	of the items in $S_1\cap S_2$ in 
	$O(n(\log w)/w + k)$ expected time in the restricted word-RAM model,
	where $n$ is the size of $S_1$ and $S_2$ and $k$ is the size of the output.
\end{theorem}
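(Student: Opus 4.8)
The plan is to assemble the five-step procedure described just above the statement into a proof, arguing correctness first and then bounding the expected running time term by term. First I would fix the representation guaranteed by the data-structure section: each of $S_1$ and $S_2$ is stored as a sequence of consecutive, pairwise-disjoint interval regions of $\Theta(w/\log w)$ points along $\mathcal{C}$, each region carrying a 2-3 cuckoo hash-filter (built with the common hash and fingerprint functions and a constant-size stash) or, on the rare failed construction, a sorted fallback list, together with global hash tables $H_1,H_2$ on the whole sets for culling. Given the query range $\mathcal{R}$, discard every region disjoint from $\mathcal{R}$; write $n_i$ for the surviving size of $S_i$ and $n=n_1+n_2$, so the number of surviving regions in $S_i$ is $O(n_i(\log w)/w)$.

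Next I would bound the number of overlapping region pairs, which is what drives every step. Since the regions inside each $S_i$ are disjoint and sorted along $\mathcal{C}$, a two-pointer merge that records a pair whenever the two current regions overlap and then advances whichever region ends first touches each region only a constant number of times; hence the number of overlapping pairs, and the cost of enumerating them, is $O(n(\log w)/w)$. This gives the $O(n(\log w)/w)$ bound for Step~1, and, because each pairwise filter intersection via Equation~\ref{eq:filter} (minus the backing-table checks) and each per-interval OR of the partial filters touches only $O(1)$ words by Theorem~\ref{thm:intersect}, the same bound for Steps~2 and~3; the $O(1)$ stash items per filter are resolved with $O(1)$ lookups each, totalling $O(n(\log w)/w)$.

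For correctness I would isolate the two-out-of-three common-cell property: an item $x\in S_1\cap S_2$ occupies two of its three hash cells in each of its two region filters, so the two occupied pairs share a common index $r$, and at that index the XOR in Equation~\ref{eq:filter} vanishes while the mask bit is set, so $x$'s fingerprint survives into the corresponding partial filter $F_{j,k}$ and hence into $F_j$ after the collision-free OR of Step~3 (collision-free because distinct source intervals contribute disjoint item sets). Therefore $x$ is examined in Step~4 and, being genuinely in both sets, is confirmed by the $H_1$ and $H_2$ lookups and added to $Z$; conversely every member of $Z$ passes those lookups, so $Z=S_1\cap S_2$ exactly. Step~4 then runs in $O(n(\log w)/w+k+p)$ time, where $p$ is the number of matching-fingerprint-but-different-item events; with $\delta=\Theta(\log w)$ bits chosen so that two distinct items collide with probability at most $1/w$, and since each item occupies at most two cells, $\mathbf{E}[p]=O(n/w)$. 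Finally, a filter construction fails for a region only with probability at most $1/w^{s}$, and by Lemma~\ref{lem:fragile-size} the expected total work on the sorted-list fallback over all overlapping pairs is $O(n/w^{s})$, which is dominated. Summing all contributions yields the claimed $O(n(\log w)/w+k)$ expected time, and only AND/OR/NOT/XOR plus standard operations are used, so the restricted word-RAM suffices.

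I expect the main obstacle to be the correctness bookkeeping rather than the timing: one must verify that replacing the full 2-3 cuckoo intersection routine by the ``filter-only'' variant that skips the backing-table checks in Steps~2--3 — producing merely \emph{partial} cuckoo filters whose items may live in a single cell — still cannot drop any true intersection item, and that deferring all verification to one post-processing pass against $H_1$ and $H_2$ both eliminates every false positive and keeps the extra work inside the $O(k+p)$ budget. The common-cell observation is precisely what licenses this, so I would state it explicitly before carrying out the rest of the accounting.
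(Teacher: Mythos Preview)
Your proposal is correct and follows essentially the same approach as the paper: you assemble the inline per-step time bounds, invoke the two-out-of-three common-cell observation for correctness, bound $\mathbf{E}[p]=O(n/w)$ from the $\Theta(\log w)$-bit fingerprints, and absorb the fallback cost via Lemma~\ref{lem:fragile-size}. If anything, your treatment is slightly more explicit than the paper's (e.g., spelling out the two-pointer merge to bound the number of overlapping region pairs, and flagging why the Step~3 OR is collision-free), but there is no genuine methodological difference.
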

\fi


\subsection{Answering Spatial Multiple-Set Intersection Queries}

The main bottleneck for extending our method from the previous section
to spatial multiple-set intersection queries is that the
partial result of performing the bit-parallel intersection of a pair of 2-3
cuckoo filters is itself not a 2-3 cuckoo filter, because after the
intersection computation is performed some fingerprints might be stored in just one
location, not two.
We can assume, however, that every subword in a fingerprint vector, $F$,
is either all 0's or it holds a complete fingerprint of $O(\log w)$ bits.

We can to augment our 2-3 cuckoo filter
representations so that we can restore them to be proper 2-3 cuckoo filters
even after a pairwise intersection computation, 
as follows:
\begin{enumerate}
\item
For each fingerprint vector, $F$, create and store \emph{cuckoo-restore}
information, which encodes a permutation, $\pi_F$, which routes every subword
fingerprint, $f$,
in $F$ to the location of its other location, which we call $f$'s \emph{twin}.  
That is, if the fingerprint $f$
for some item $x$ is stored at subword locations $i$ and $j$ in $F$, then
$\pi_F$ moves the copy of $f$ at $i$ to position $j$ and the copy of $f$ at $j$
to position $i$.
This step can be done in $O(1)$ time in the permutation word-RAM model.
\item
After an intersection operation occurs, so that each item in the intersection 
of a pair of 2-3 cuckoo filters
(including some possible false positives) may be is stored in a fingerprint
vector, $F$, in just one location instead of two,
apply $\pi_F$ to create a copy, $F'$ of $F$ such that each fingerprint 
subword in $F$ is routed to its twin location.
\item
Restore $F$ to be a 2-3 cuckoo filter by computing the bit-wise OR of $F$ and
$F'$. Note that there can be no collisions occurring as a result of
this bit-wise OR, because every fingerprint (even the false positives) 
is the same as its twin and all empty locations are all 0's. Thus,
this bit-wise OR will copy each surviving fingerprint to its 
twin location and then OR this fingerprint with itself (causing no change) or
with a subword of all 0's (restoring the fingerprint to its original two
locations).
\end{enumerate}

Thus, to answer a spatial multiple-set intersection
query, for sets, $S_1,S_2,\ldots,S_t$, we perform the above pairwise
intersection operations iteratively, first with $S_1$ and $S_2$, and then
with the result of this intersection with $S_3$, and so on, postponing until
the very end our culling of false positives by doing a lookup in global
hash tables for $S_1,S_2,\ldots,S_t$, which we assume we have available,
to check for each item $x$ in the final 2-3 cuckoo hash-table whether $x$
is indeed a member of every set.
Each intersection step, for an iteration $i$, 
takes $O(1)$ time per cuckoo-filter, in
the permutation word-RAM model, for which
we can charge this cost in iteration $i$ 
to the size of the subset in a region of the set $S_i$.

We can force 
the expected total number
of false positives to be at most $n/w^{c+1}$, for a fixed constant $c\ge 1$,
by defining the fingerprints to have size at least $(c+1)\log w$.
Thus,
testing each surving candidate at the end to see if it really belongs to the
common intersection, by looking up each such element $x$ in the $t$
hash tables for each set, takes time $O(kt)$ plus a 
term that is dominated by $O(n/w)$.

Note that
in order to implement our algorithm in the restricted word-RAM model,
the only part of this computation left as of yet unspecified is how to create a
representation of the permutation, $\pi_F$, and perform the routing of subwords
defined by $\pi_F$.
For this part,
we follow the approach of Yang {\it et al.}~\cite{Yang99}, who define
subword permutation micro-code instructions and show how to implement
them using a double butterfly network, which is also known as
a Benes network.
\ifFull
See Figure~\ref{fig:benes}.

\begin{figure}[htb!]
\vspace*{-10pt}
\begin{center}
\includegraphics[width=3in]{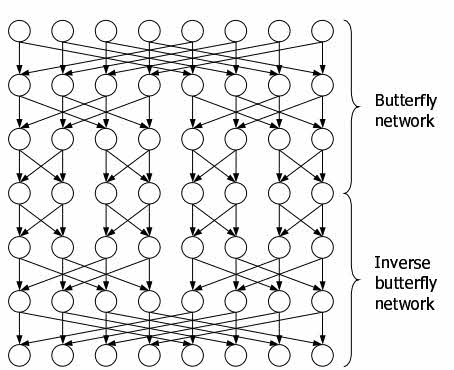}
\end{center}
\vspace*{-10pt}
\caption{An example Benes network, which is
constructed from two back-to-back butterfly networks.
Copyright~\copyright~David BS, licensed under cc by-sa 3.0.
\label{fig:benes}}
\end{figure}
\fi

It is known that a Benes network can route any permutation and that it is fairly
straightforward to set the switches in a Benes network for this purpose for any
given permutation. So let $N$ be such a 
network, which in the case of routing $O(w/\log w)$ subwords of size $O(\log w)$ 
will have depth $O(\log w)$.
Depending on how the switches are set,
we note that each stage of $N$ involves keeping some subword in place and moving
others by shifting them all the same distance. Thus, each stage of a Benes
network can be implemented in $O(1)$ steps in the word-RAM model by simple
applications of AND, OR, and shift operations (plus either 
the use of mask vectors to encode the switch settings or using other 
built-in word-RAM operations based on an encoding of $\pi_F$.
Thus, we can store an encoding of a Benes network implementing $\pi_F$
for each fingerprint vector, $F$, as our cuckoo-restore information, and this
will allow us to restore any cuckoo filter to be a 2-3 cuckoo filter in
$O(\log w)$ time in the word-RAM model.
This gives us the following result.

\begin{theorem}
Suppose $t\le w^c$ sets are from interval regions identified 
through a spatial multiple-set
intersection query for a range $\mathcal{R}\subseteq\mathcal{C}$,
in structures constructed as described above.
We can compute the result of such a query in
$O(n(\log w)/w + kt)$ expected time in the permutation word-RAM model,
or
$O(n(\log^2 w)/w + kt)$ expected time in the restricted word-RAM model,
where $n$ is the total size of all the sets involved,
$k$ is the size of the output,
and $c\ge 1$ is a constant.
\end{theorem}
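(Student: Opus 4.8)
The plan is to answer the query by iterated pairwise intersection, maintaining the invariant that after each step the partial answer is stored as a \emph{proper} 2-3 cuckoo hash-filter, so that the next pairwise step applies verbatim, and to cull false positives only once, at the very end. First I would discard from every $S_i$ each interval region not intersecting $\mathcal{R}$; since regions hold $\Theta(w/\log w)$ items, the surviving part of $S_i$ occupies $O(n_i(\log w)/w)$ filter words, where $n_i$ counts the surviving items of $S_i$ and $n=\sum_i n_i$. I would then process the sets in order: compute a (partial) filter representation of $S_1\cap S_2$ by the bit-parallel procedure of Equation~\ref{eq:filter} --- matching overlapping regions through the merge of Lemma~\ref{lem:fragile-size}, handling the constant-size stashes by direct lookup --- and then intersect the running result with $S_3$, then $S_4$, and so on. Before restoration, iteration $i$ costs $O(1)$ word operations per overlapping region pair, which I charge to the filter-word count of $S_i$; summed over the $t-1$ iterations this telescopes to $O(n(\log w)/w)$.

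The crux --- and the reason the pairwise output is not immediately reusable --- is that after an intersection a surviving fingerprint may occupy only one of its two legal cells, so the vector is no longer a 2-3 cuckoo filter. Here I invoke the cuckoo-restore information: for each region's fingerprint vector $F$ we have precomputed the twin permutation $\pi_F$ that sends every fingerprint subword to its other legal cell. Applying $\pi_F$ to the post-intersection vector and OR-ing with the original yields a vector in which every surviving fingerprint again sits in both cells, and no collision can occur in this OR because a fingerprint equals its own twin and all vacant cells are zero. In the permutation word-RAM this is $O(1)$ per region. In the restricted word-RAM I realize $\pi_F$ by a Benes network of depth $O(\log w)$ on the $O(w/\log w)$ subwords, each stage being a constant number of AND, OR, and shift operations; this is $O(\log w)$ per region, which is exactly the source of the extra $\log w$ factor in the restricted bound.

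Finally I would set the fingerprint length to $(c+1)\log w$, so the expected number of false positives accumulated over all intersections is $O(n/w^{c+1})$, dominated by $O(n/w)$. After the last iteration I walk the surviving fingerprints, recover each candidate from its backing 2-3 cuckoo table, and test membership against the $t$ global hash tables $H_1,\dots,H_t$: the $k$ genuine members cost $O(kt)$ lookups, and the false-positive candidates add an expected $O(t\cdot n/w^{c+1})$, which is negligible. Correctness follows because any true member of the common intersection keeps a matching fingerprint in a shared cell at every pairwise step (in a 2-3 cuckoo filter an item occupies two of its three cells, so two filters for the same item agree on at least one occupied cell), so it is never dropped, while the final cull removes every surviving non-member. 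Adding the merge step, the $O(n(\log w)/w)$ (resp.\ $O(n(\log^2 w)/w)$) cost of intersect-plus-restore, and the $O(kt)$ verification, and absorbing the $\tilde O(1/w^s)$-probability fallback to sorted-list merging, gives the stated bounds. The main obstacle is the restoration step itself: without it the iteration cannot proceed past two sets, and making it cost only $O(1)$ (permutation model) or $O(\log w)$ (restricted model) per region while keeping the restoring OR collision-free is precisely what the twin-permutation / Benes-network construction delivers.
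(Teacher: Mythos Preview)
Your proposal is correct and follows essentially the same approach as the paper: iterated pairwise intersection with the cuckoo-restore twin permutation $\pi_F$ (realized in $O(1)$ time in the permutation model and via an $O(\log w)$-depth Benes network in the restricted model), fingerprints of length $(c+1)\log w$ to control false positives, and a single final cull against the $t$ global hash tables costing $O(kt)$. The charging of iteration $i$'s cost to the filter-word count of $S_i$, the collision-freeness of the restoring OR, and the fallback accounting are all exactly as in the paper.
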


\ifFull
\section{Discussion}
We should mention
a slight technicality with our results, as described above,
in that our methods report all the items
having points belonging to intervals intersecting a query range
$\mathcal{R}\subseteq\mathcal{C}$ 
and are in the common intersection.
Thus, there is a possibility for items close to one of the boundaries
of $\mathcal{R}$ but nevertheless outside of $\mathcal{R}$ 
to be included. This, of course, poses no computational difficulty, however,
as we could add a post-processing step that only outputs items with points
that are confirmed to be in $\mathcal{R}$ rather than points belonging to
intervals that intersect $\mathcal{R}$, if that is desired.
\fi

\paragraph{\bf Acknowledgments.}
This article reports on work supported by the 
DARPA under agreement no.~AFRL FA8750-15-2-0092.
The views expressed are those of the authors and do not reflect the
official policy or position of the Department of Defense
or the U.S.~Government.
This work was also supported in part from NSF grants
1228639, 1526631, 
1217322, 1618301, and 1616248.
We thank David Eppstein for several helpful discussions.

\ifFull
\bibliography{refs}
\else
\bibliography{refs2}
\fi

\ifFull
\clearpage
\begin{appendix}
\section{An Analysis of 2-3 Cuckoo Hash-Filters}
Following the framework 
of Eppstein {\it et al.}~\cite{Eppstein2017},
in this appendix
we analyze the performance of our construction algorithm using a 
hypergraph model
of 2-3 cuckoo hashing, which itself is a generalization of 
an analysis framework for (standard) cuckoo hashing,
as proposed by Pagh and Rodler~\cite{Pagh:2014}. 
In the standard cuckoo hashing case, one inserts $n$ items 
into a table $T$ with 
$2(1+\epsilon)n$ buckets (for a fixed constant $\epsilon>0$), via hash functions $h_1$ and $h_2$ whose values are always distinct.
In a standard cuckoo hash table,
each bucket can
store at most one item and each item is stored in one-out-of-two
locations. To insert an item $x$, we place it in
$T[h_1(x)]$ if that bucket is empty.  Otherwise, we \emph{evict} the
item $y$ in $T[h_1(x)]$, replace it with $x$, and attempt to insert
$y$ into its other cell $T[h_1(y)+h_2(y)-h_1(x)]$.  If that location is free, then we are done. If not, we evict the item $z$ in that location, attempt to
insert $z$ into its other cell, and so on.  We can view the hash
functions as defining a random graph with $m$
vertices corresponding to the buckets in $T$, with each of the $n$ items $x$
yielding an edge $(h_1(x),h_2(x))$.  The insertion procedure
successfully places all $n$ items if and only if each connected
component in this \emph{cuckoo graph} has at most one cycle.
With high probability all connected
components have at most cycle and no 
component is of size larger than $O(\log n)$~\cite{Pagh:2014}.  
In our 2-3 cuckoo table insertion algorithm
described above, suppose 
we allow the eviction process to occur up to $L=D \log n$ times,
for some constant $D$, before declaring a failure.
As above with our 2-3 cuckoo case,
there is a small probability that an insertion cannot
be done or takes longer time than expected.  In such
cases, rather than immediately failing, unplaced items can be placed in a 
small stash.
The 2-3 cuckoo hashing framework we use also uses such a stash, as 
mentioned above, of course.

As in the analysis framework of Eppstein {\it et al.}~\cite{Eppstein2017}, 
our analysis of 2-3 cuckoo hashing uses a hypergraph model.
In the iterative insertion algorithm for adding items to a table 
with 2-3 cuckoo hashing, we insert $n$ items into a table $T$ via three hash functions $h_1$, $h_2$, and $h_3$ with distinct values. Each item is
stored at two of the three locations given by its hash values, except that an item that cannot be stored successfully may be placed into a stash.
In this setting, we require the table to have $6(1+\epsilon)n$ total bucket spaces,
so that the final ``load'' of the hash table is less than $1/6$.
It is known that a load strictly less than 
$1/6$ allows all items to be placed with high probability, but
a load strictly greater than $1/6$ will fail to place all items
with high probability.  These results are discussed and proven 
by Amossen and Pagh \cite{Amossen11} and Loh and Pagh \cite{loh2014thresholds}
(see also the related combinatorial results in \cite{karonski2002phase}).  

The generalization of the cuckoo graph for standard cuckoo hashing to
the two-out-of-three paradigm leads to
a {\em cuckoo hypergraph} model~\cite{Eppstein2017}, 
where each bucket is represented by a vertex and each item
is represented by a hyperedge attached to the three vertices (buckets)
that are the chosen locations for that item.  
Eppstein {\it et al.}~\cite{Eppstein2017} use this model to show the
following:

\begin{theorem}
\label{thm:failure} For any constant integer $s \ge 1$, for a
sufficiently large constant $C$, the size $S$ of the stash in
a 2-3 cuckoo hash table after
all items have been inserted satisfies $\Pr(S \ge s) = \tilde{O}(n^{-s})$.
\end{theorem}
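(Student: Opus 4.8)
The plan is to bound $\Pr(S \ge s)$ by a first-moment argument over ``witness'' sub-hypergraphs in the cuckoo hypergraph, separating two sources of stashed items: a \emph{structural} source, consisting of items that cannot be placed by \emph{any} valid 2-orientation, and an \emph{algorithmic} source, consisting of placeable items that the online insertion procedure abandons after exhausting its $L = D\log n$ eviction budget. I would first argue that, for the constant $C$ (hence $D$) taken sufficiently large, the algorithmic source contributes at most $\tilde O(n^{-s})$ to $\Pr(S \ge s)$: a forced timeout requires an eviction walk of length $\Omega(\log n)$, which occurs only inside a connected sub-hypergraph of size $\Omega(\log n)$, and below the load threshold $1/6$ such large components appear with probability that is polynomially small in $n$ (\cite{Amossen11,loh2014thresholds}); choosing $D$ large makes this contribution negligible, and this is the role of the ``sufficiently large $C$.'' This reduces the theorem to bounding the structural stash.

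For the structural stash I would use the flow/matching formulation in which each item supplies two units of demand that must be routed to two of its three incident vertices, each vertex having unit capacity. A defect version of Hall's theorem (equivalently, max-flow--min-cut applied to this network) shows that the minimum number of items that must be stashed equals
\begin{equation}
\max_{\emptyset \neq E' \subseteq E}\bigl(2|E'| - |N(E')|\bigr)^{+},
\label{eq:excess}
\end{equation}
where $N(E')$ denotes the set of vertices incident to the hyperedges of $E'$ and we call $2|E'| - |N(E')|$ the \emph{excess} of $E'$. Consequently, the event $S \ge s$ forces the existence of a sub-hypergraph of excess at least $s$; decomposing it into connected components and using that excess is additive over components, $S \ge s$ implies a collection of vertex-disjoint connected sub-hypergraphs whose excesses sum to at least $s$.

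The heart of the argument is a structural lemma asserting that an inclusion-minimal connected sub-hypergraph of excess $j$ has only $O(j)$ hyperedges and vertices. The point is that in such a minimal structure every hyperedge must share at least two of its three vertices with the remaining edges, for otherwise its removal would leave the excess unchanged; building the structure in a connected order then adds at most one new vertex per edge, forcing $e \le j + O(1)$. This boundedness is exactly what rescues the first moment: a naive count over \emph{all} dense sub-hypergraphs diverges near the $1/6$ load, but restricting to minimal witnesses leaves only $O(1)$ structure types for constant $j$. For a minimal witness with $v$ vertices and $e$ hyperedges the expected number of occurrences is $\binom{m}{v}\binom{n}{e}$ times the probability $\Theta(m^{-3e})$ that the $e$ chosen items realize the prescribed incidences; with $m = \Theta(n)$ this is $\Theta\bigl(n^{\,v - 2e}\bigr) = \Theta(n^{-j})$, with an $O(1)$ prefactor because $v,e = O(1)$. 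Summing the per-component estimates over the $O(1)$ partitions of the total excess into positive parts $j_1 + \cdots + j_r \ge s$ multiplies to $\tilde O\bigl(n^{-(j_1 + \cdots + j_r)}\bigr) = \tilde O(n^{-s})$, completing the structural bound.

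I expect the two main obstacles to be, first, proving the minimal-witness size lemma with the correct constant so that the excess in Equation~\ref{eq:excess} transfers to exactly $n^{-s}$ (rather than $n^{-\Omega(s)}$), since an off-by-a-factor in the excess characterization would weaken the bound; and second, making rigorous the claim that below the $1/6$ threshold the algorithmic-timeout term is dominated, which is where the $\Omega(\log n)$ component-size tail bounds of Amossen and Pagh and of Loh and Pagh are invoked and where the polylogarithmic slack hidden in the $\tilde O(\cdot)$ notation is spent. The Hall-type characterization in Equation~\ref{eq:excess} and the plain first-moment computation are comparatively routine once the minimal-witness lemma is in hand.
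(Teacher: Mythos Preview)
The paper does not actually prove this theorem. It is quoted in the appendix as a result of Eppstein, Goodrich, Mitzenmacher, and Torres~\cite{Eppstein2017}; the surrounding text only sets up the hypergraph and incidence-graph vocabulary (acyclic and unicyclic components) used in that reference and then moves on to prove a different statement, the $O(n)$ construction-time bound, via Lemma~\ref{lem:expdec}. So there is no in-paper argument to compare your proposal against; the most one can say is that the paper's sketch of ``their proof'' points to counting components of the incidence graph by cyclomatic number, which is equivalent (up to a shift by one) to your excess parameter $2|E'|-|N(E')|$, so your first-moment-over-dense-subhypergraphs plan is in the same spirit.

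On the proposal itself, two points would need tightening before it goes through. First, your displayed identity for the minimum structural stash is not an equality: three items sharing the same three buckets have excess $3$ but only two of them must be stashed. What is true, and what you actually need, is the one-sided implication that a structural stash of size at least $s$ forces some $E'$ with $2|E'|-|N(E')|\ge s$; this follows from the defect form of Hall's theorem since a maximum $2$-matching leaves at most $\max_{E'}(2|E'|-|N(E')|)$ item-copies unmatched, and stashing the affected items (at most that many) suffices. Second, your reduction of the algorithmic term to ``large component'' events is not immediate in the two-out-of-three setting: unlike standard cuckoo hashing, the eviction walk here makes \emph{random} choices and can revisit cells, so a walk of length $\Omega(\log n)$ does not by itself force a component of size $\Omega(\log n)$. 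You would need an additional argument that, inside a component admitting a valid $2$-orientation, the randomized insertion succeeds in time polynomial in the component size with high probability; only then can the exponential tail of Lemma~\ref{lem:expdec} be invoked, with $D$ chosen large enough, to drive the timeout probability below $\tilde O(n^{-s})$.
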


The $\tilde{O}$ notation ignores polylogarithmic factors.
Their proof of this theorem uses 
a 3-uniform hypergraph $(V,T)$, which consists of a set $V$ of vertices and a
set or multiset $T$ of triples of vertices, the hyperedges of the
cuckoo hypergraph. It can be represented by a bipartite \emph{incidence
  graph} $(V,H,I)$, a graph that has $V\cup H$ as its vertices and the
set of vertex-hyperedge incidences $I=\{(v,t)\mid v\in V, t\in T, v\in
t\}$ as its edges. Even if the hypergraph has repeated
pairs or triples among its vertices, the incidence graph is a simple
graph. Following Berge~\cite{Berge}, we call a hypergraph \emph{acyclic} if its incidence graph is acyclic as an
undirected graph. The \emph{connected components} of a hypergraph are
the subsets of vertices and hyperedges corresponding to connected
components of the incidence graph. A hypergraph is \emph{connected} if
it has exactly one connected component. We define acyclicity of
components in the same way as acyclicity of the whole
hypergraph. We say that a connected component of the hypergraph is
\emph{unicyclic} if the corresponding connected component of the
incidence graph is unicyclic: that is, that it has exactly one
undirected cycle. 
Such connected components allow for successful 2-3 cuckoo insertions in their 
corresponding locations, but the running times for such constructions
depend on the sizes of such components.
Let $C_v$ be the component containing $v$ in the randomly chosen hypergraph,
and let $E_v$ represent the set of edges in $C_v$.  
Eppstein {\it et al.}~\cite{Eppstein2017} show the following.

\begin{lemma}
\label{lem:expdec}
There exists a constant $\beta \in (0,1)$ such that for any fixed vertex $v$ and integer $k > 0$, 
$$\Pr(|E_v| \geq k) \leq \beta^k.$$
\end{lemma}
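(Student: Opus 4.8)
The plan is to analyze the breadth-first exploration of the connected component of $v$ in the cuckoo hypergraph (equivalently, in its bipartite incidence graph) and to show that this exploration is dominated by a subcritical branching process, whose total progeny is well known to have an exponentially decaying tail. The key quantitative input is the load condition: the table has $m = 6(1+\epsilon)n$ buckets (vertices) and $n$ items (hyperedges), each hyperedge incident to $3$ vertices chosen uniformly at random. Hence a fixed vertex is incident to a number of hyperedges that is approximately $\mathrm{Binomial}(n, 3/m)$ with mean $\mu = 3n/m = 1/(2(1+\epsilon))$, which is strictly below $1/2$ precisely because the load is strictly below the $1/6$ threshold.

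First I would model the exploration as a branching process in which the ``individuals'' are the hyperedges of $C_v$. Starting from $v$, the first generation consists of the hyperedges incident to $v$; each discovered hyperedge has two endpoints other than the one through which it was reached, and each such new vertex contributes its further incident (undiscovered) hyperedges to the next generation. In the idealized (Poisson) approximation each newly reached vertex independently spawns $\mathrm{Poisson}(\mu)$ new hyperedges, so the expected number of offspring of each hyperedge is $2\mu = 1/(1+\epsilon) < 1$. The process is therefore subcritical, and a standard large-deviation bound for the total progeny of a subcritical Galton--Watson process yields a constant $\beta = \beta(\epsilon) \in (0,1)$ with $\Pr(|E_v| \ge k) \le \beta^k$. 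Since $|E_v|$ counts exactly the hyperedges enumerated by this process, the tail bound transfers directly.

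The main obstacle is that the actual exploration is not a clean branching process: the hyperedges are not independent (each item chooses its triple once, without replacement among vertices, and the total number of items is fixed at $n$), and the exploration conditions on all structure revealed so far, which depletes the pool of available items and vertices. I would handle this by a stochastic-domination argument, coupling the true exploration to an independent branching process with offspring law $\mathrm{Binomial}(n, 3/m)$ (or its Poisson majorant): at every step the number of as-yet-unseen items incident to the current vertex is dominated by $\mathrm{Binomial}(n, 3/m)$, and revealing fewer vertices and items can only shrink the remaining pool, so the domination is preserved throughout. Making this coupling precise---so that the dependencies never inflate the offspring beyond the subcritical mean---is the technically delicate step.

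An alternative route, which I would fall back on if the coupling proves awkward, is a direct first-moment (union-bound) argument over witness structures: if $|E_v| \ge k$ then $C_v$ contains a connected sub-hypergraph with $k$ hyperedges and hence a spanning tree of its incidence graph having $k$ hyperedge-nodes and at most $2k+1$ vertex-nodes. The number of such tree shapes, together with their vertex labelings, is at most $C^k m^{2k}$ for a constant $C$, while the probability that $k$ specified items realize a specified hypertree is at most $(c/m^{2})^{k}$ for a constant $c$; the powers of $m$ cancel, leaving $(Cc)^k$, and the load being strictly below $1/6$ is exactly what forces the resulting base $\beta = Cc$ below $1$. Either way, the exponential tail with an explicit $\beta \in (0,1)$ follows.
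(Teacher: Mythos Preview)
The paper does not actually prove this lemma: it merely quotes it, attributing the result to Eppstein \textit{et al.}~\cite{Eppstein2017}, so there is no in-paper argument to compare against. That said, your branching-process approach is the standard and correct way to establish such an exponential tail, and it is almost certainly what the cited reference does: the parameter computation $2\mu = 1/(1+\epsilon) < 1$ is exactly the subcriticality condition coming from the load being below $1/6$, and the stochastic-domination step you flag as delicate is routine here because conditioning on the revealed structure only removes items and vertices from the pool, so the number of fresh hyperedges at each newly discovered vertex is indeed dominated by $\mathrm{Binomial}(n,3/m)$.

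One caution on your fallback first-moment argument: the cancellation of powers of $m$ works as you describe, but the combinatorial constant $C$ counting rooted hypertree shapes is not obviously small enough to make $\beta = Cc < 1$ from the crude bound alone; getting the base below $1$ that way typically requires a more careful enumeration (or a Cayley-type formula for labeled hypertrees) rather than a generic $C^k$ bound. The branching-process route avoids this issue entirely and is the cleaner path.
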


From this result, we can derive the following bound on the total time
needed for
our construction algorithm given above, following an approach
used by Goodrich and Mitzenmacher to analyze a parallel 
algorithm for standard cuckoo hashing~\cite{Goodrich2011b}.

\begin{theorem}
	The running time of the iterative insertion algorithm
	for constructing a 2-3 cuckoo hash table
	is $O(n)$ with high probability.
\end{theorem}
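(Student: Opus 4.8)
The plan is to show that the running time of the iterative insertion algorithm for a 2-3 cuckoo hash table is $O(n)$ with high probability by charging the total work to the sizes of the connected components of the cuckoo hypergraph, and then using the exponential tail bound of Lemma~\ref{lem:expdec} to control that total. First I would observe that when we insert an item $x$ (or process an evicted item $y$ from the buffer $Q$), all the cells we touch — the triple of hash locations and the chain of evictions that follows — lie entirely within the connected component $C_v$ of the incidence graph containing the vertices involved, so the number of steps spent on that one insertion is $O(|E_v|+1)$, unless the stopping condition kicks in, in which case it is $O(L+1) = O(\log n)$ and the offending items are dumped to the stash. Summing over all $n$ insertions, and noting that each item belongs to exactly one component, the total work is $O\!\left(\sum_{v} |E_{v}| + (\text{number of stopping events})\cdot\log n + n\right)$, where the sum ranges over (a representative vertex of) each component hit.

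The key step is then to bound $\mathbf{E}\!\left[\sum_v |E_v|\right]$ and its tail. By Lemma~\ref{lem:expdec}, for any fixed vertex $v$ we have $\Pr(|E_v|\ge k)\le \beta^k$ for a constant $\beta\in(0,1)$, so $\mathbf{E}[|E_v|] = O(1)$; summing over the $O(n)$ vertices gives $\mathbf{E}\!\left[\sum_v |E_v|\right] = O(n)$. For the high-probability statement I would combine this with the standard fact (used in the analysis of Eppstein {\it et al.}~\cite{Eppstein2017} and Pagh and Rodler~\cite{Pagh:2014}) that, with $L = D\log n$ for a suitable constant $D$, every component has size $O(\log n)$ with high probability, so no single insertion ever actually reaches the stopping condition except with probability $\tilde O(n^{-s})$; thus with high probability the stopping-condition term vanishes and the running time is just $O\!\left(\sum_v |E_v| + n\right)$. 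To turn the expectation bound on $\sum_v |E_v|$ into a high-probability bound, I would follow Goodrich and Mitzenmacher~\cite{Goodrich2011b}: the random variables $|E_v|$ are not independent, but one can group vertices into components and use the exponential tail on individual component sizes together with a union bound over the (at most $n$) components to conclude that $\sum_v |E_v| = O(n)$ with probability $1 - n^{-\Omega(1)}$.

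I expect the main obstacle to be handling the dependence among the $|E_v|$ variables cleanly when converting the $O(n)$ expectation into a high-probability bound — a naive Chernoff argument does not apply because component sizes are correlated and a single large component could in principle contribute a superconstant amount. The resolution is exactly the component-counting argument of~\cite{Goodrich2011b}: condition on the event (of probability $1 - \tilde O(n^{-s})$) that all components have size $O(\log n)$, so that each $|E_v|$ is bounded, and then apply a Bernstein/Chernoff-type bound to the sum of these bounded, weakly dependent variables, or alternatively bound the sum directly by counting the number of components of each size via the $\beta^k$ tail and a union bound. Once that is in hand, the remaining pieces — the per-insertion charging, the $O(L)$ cost of a stopping event, the $O(n)$ additive term for the trivial per-item bookkeeping — are routine.
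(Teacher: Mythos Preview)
Your charging of the total work to $\sum_v |E_v|$ and your expectation computation via Lemma~\ref{lem:expdec} match the paper exactly. The divergence is in the high-probability step. You correctly flag the dependence among the $|E_v|$ as the obstacle, but neither of your proposed fixes does the job as stated. Conditioning on ``all components have size $O(\log n)$'' and then invoking a Chernoff/Bernstein-type bound does not go through: the $|E_v|$ are not independent or negatively associated (all vertices in the same component share the same value), and merely knowing each $|E_v|=O(\log n)$ only yields the crude deterministic bound $\sum_v |E_v|=O(n\log n)$, not $O(n)$. Your alternative, a union bound over component-size counts using the $\beta^k$ tail, runs into the problem that Lemma~\ref{lem:expdec} is a first-moment bound on a single vertex; Markov on the number of vertices in size-$k$ components gives failure probabilities summing to a constant rather than $n^{-\Omega(1)}$, so you would need second-moment control that you have not established.

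The paper's route---which is in fact what Goodrich and Mitzenmacher do in~\cite{Goodrich2011b}---is different in kind: it forms the Doob martingale obtained by exposing the hyperedges one at a time and applies Azuma/McDiarmid. The point is that changing a single hyperedge can alter $\sum_v |E_v|$ by at most a quantity polynomial in the maximum component size; conditioning on the (superpolynomially likely) event that every component has at most $c_1\log^2 n$ hyperedges makes this Lipschitz constant $O(\log^6 n)$, and then McDiarmid's inequality with a bad-event correction (Theorem~3.7 of~\cite{mcdiarmid1998}) gives
\[
\Pr\Bigl(\sum_v |E_v| \ge \mathbf{E}\Bigl[\sum_v |E_v|\Bigr] + \lambda\Bigr)\ \le\ e^{-2\lambda^2/(n\,c_2\log^6 n)} + \delta,
\]
which yields the $O(n)$ high-probability bound for $\lambda$ a small multiple of $n$. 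So the missing ingredient in your plan is precisely this edge-exposure martingale with bounded differences; once you name it, the rest of your outline is fine.
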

\begin{proof}
The total time for performing the iterative insertion algorithm 
is proportional to $\sum_v |E_v|$, which 
	in expectation is as follows, by Lemma~\ref{lem:expdec}:
\begin{eqnarray*}
	{\bf E}\left[\sum_v |E_v|\right] &=&
	\sum_v {\bf E}[|E_v|] \\	
	&\le& 2n \sum_{k\ge 0} \Pr(|E_v|\ge k) \\
	&\le& 2n \sum_{k\ge 0} \beta^k \\
	&=& O(n).
\end{eqnarray*}
For the high probability bound, we follow
a similar argument used by Goodrich and Mitzenmacher~\cite{Goodrich2011b}
for standard cuckoo hashing,
which uses a variant of 
Asuma's inequality
If all component sizes were bounded by say $O(\log^2 n)$, 
then a change in any single edge in the cuckoo hypergraph could affect
$\sum_v |E_v|$ by
only $O(\log^6 n)$, and we could directly apply Azuma's
inequality to the
Doob martingale obtained by exposing the edges of the cuckoo hypergraph 
one at a time. Unfortunately, all
component sizes are $O(\log^2 n)$ only with high probability. 
However, standard results yield that one
can simply add in the probability of a ``bad event'' to 
a suitable tail bound, in this case the bad event
being that some component size is larger than 
$c_1 \log^2 n$ for some suitable constant $c_1$. 
Specifically, we
directly utilize Theorem 3.7 from McDairmid~\cite{mcdiarmid1998},
which allows us to conclude that if the probability of a bad event is
a superpolynomially small $\delta$, then
\[
	\Pr\left(\sum_v |E_v| \ge \sum {\bf E}[|E_v|] + \lambda\right)
	\le e^{-(2\lambda^2)/(nc_2\log^6 n)} + \delta,
\]
where $c_2$ is a suitable constant.
\end{proof}

\end{appendix}
\fi

\end{document}